\documentclass[10pt]{article}

\usepackage[T1]{fontenc}
\usepackage[utf8]{inputenc}
\usepackage{lmodern}
\usepackage[margin=1in]{geometry}
\usepackage{amsmath,amssymb,amsthm}
\usepackage{booktabs}
\usepackage{siunitx}
\usepackage{xcolor}
\usepackage{graphicx}
\usepackage{microtype}
\usepackage{caption}
\usepackage{bytefield}
\usepackage{placeins}
\usepackage{float}
\usepackage{tikz}
\usetikzlibrary{positioning,calc,fit,backgrounds,arrows.meta,shadows.blur,shapes.geometric}
\usepackage[colorlinks=true,allcolors=linkblue]{hyperref}

\definecolor{ink}{HTML}{1B1F24}
\definecolor{linkblue}{HTML}{2A5DB0}
\definecolor{obsfill}{HTML}{EAF0FA}
\definecolor{obsline}{HTML}{3A5FB0}
\definecolor{redfill}{HTML}{F2ECF8}
\definecolor{redline}{HTML}{6A4C9C}
\definecolor{clmfill}{HTML}{FBEFE6}
\definecolor{clmline}{HTML}{C8743A}
\definecolor{docfill}{HTML}{E9F4EE}
\definecolor{docline}{HTML}{2F7D4F}
\definecolor{cS0}{HTML}{3A5FB0}
\definecolor{cS1}{HTML}{2F7D4F}
\definecolor{cS2}{HTML}{C8743A}
\definecolor{soft}{HTML}{6B7480}

\newtheorem{definition}{Definition}
\newtheorem{lemma}{Lemma}
\newtheorem{proposition}{Proposition}
\newtheorem{theorem}{Theorem}

\newcommand{\hx}[1]{{\ttfamily\scriptsize\textcolor{soft}{#1}}}
\newcommand{\class}[1]{\textsf{\small #1}}
\newcommand{\code}[1]{\texttt{\small #1}}
\DeclareSIUnit\flop{FLOP}
\DeclareSIUnit\byte{B}

\newcommand{\egDevice}{RTX PRO 6000 Blackwell Max-Q}

\newcommand{\egCardMemGiB}{95}
\newcommand{\egSMCount}{188}

\newcommand{\egComputeCap}{12.0}
\newcommand{\egDriver}{580.126.09}
\newcommand{\egCuda}{13.0}
\newcommand{\egTorch}{2.12.1}
\newcommand{\egRootHash}{1303d007503848af}

\newcommand{\egNumObservations}{240}
\newcommand{\egMaxRepeats}{24}

\newcommand{\egGemmBfOneSixMedian}{248.4}

\newcommand{\egGemmFpEightMedian}{484.2}

\newcommand{\egScatterNd}{\num{3.51e-05}}
\newcommand{\egIndexNd}{\num{7.55e-05}}
\newcommand{\egK}{8}
\newcommand{\egVMargin}{3}
\newcommand{\egVtau}{\num{6.06e-04}}
\newcommand{\egPmiss}{\num{3.91e-03}}

\newcommand{\egFloorSizes}{2048, 4096, 8192, and 16384}
\newcommand{\egFloorSeqFpSixteen}{\num{1.95e-04}, \num{2.02e-04}, \num{1.93e-04}, and \num{2.02e-04}}
\newcommand{\egFloorSeqBfSixteen}{\num{1.51e-03}, \num{1.78e-03}, \num{1.53e-03}, and \num{1.69e-03}}
\newcommand{\egFloorSeqTfThreeTwo}{\num{2.71e-04}, \num{3.56e-04}, \num{2.86e-04}, and \num{2.89e-04}}
\newcommand{\egFloorSeqFpEight}{\num{1.63e-03}, \num{1.78e-03}, \num{1.61e-03}, and \num{1.71e-03}}

\newcommand{\egFlipResid}{\num{2.01e-03}}
\newcommand{\egCrossDelta}{0}
\newcommand{\egTputDiffFpSixteen}{2.56}
\newcommand{\egTputDiffFpEight}{2.21}
\newcommand{\egFillMem}{78}

\newcommand{\egFillTput}{1515}
\newcommand{\egWitnessDeltaCorrupt}{\num{1.59e-01}}
\newcommand{\egAttFloor}{\num{2.00e-03}}
\newcommand{\egAttFault}{\num{2.22e-01}}
\newcommand{\egAttRowsum}{\num{2.38e-07}}
\newcommand{\egScatterFloor}{\num{1.03e-07}}
\newcommand{\egScatterFault}{\num{7.95e-04}}
\newcommand{\egPortableResid}{\num{1.97e-04}}
\newcommand{\egPortableTau}{\num{7.50e-04}}
\newcommand{\egValDevice}{RTX 5090}

\newcommand{\egValCardMemGiB}{31}
\newcommand{\egValSMCount}{170}
\newcommand{\egValComputeCap}{12.0}
\newcommand{\egValDriver}{595.58.03}
\newcommand{\egValCuda}{13.0}
\newcommand{\egValTorch}{2.12.1}
\newcommand{\egValRefN}{8192}

\newcommand{\egValTau}{\num{6.68e-04}}

\newcommand{\egValTriadGBs}{1578}

\newcommand{\egValGemmFpEightMedian}{519.1}
\newcommand{\egValGemmFpSixteenMedian}{217.3}
\newcommand{\egValScatterNd}{\num{1.66e-05}}
\newcommand{\egValIndexNd}{\num{1.29e-05}}
\newcommand{\egValCountBitStable}{6}

\newcommand{\egValPeakMem}{24.0}
\newcommand{\egValADevice}{RTX PRO 6000 Server}

\newcommand{\egValACardMemGiB}{95}
\newcommand{\egValASMCount}{188}

\newcommand{\egValADriver}{590.48.01}

\newcommand{\egValATorch}{2.12.1}

\newcommand{\egValATau}{\num{6.06e-04}}

\newcommand{\egValATriadGBs}{1486}

\newcommand{\egValAGemmFpEightMedian}{797.2}
\newcommand{\egValAGemmFpSixteenMedian}{404.2}
\newcommand{\egValAScatterNd}{\num{4.02e-06}}
\newcommand{\egValAIndexNd}{\num{4.01e-06}}
\newcommand{\egValACountBitStable}{6}

\newcommand{\egValAPeakMem}{66.0}
\newcommand{\egNumReplayDevices}{2}
\newcommand{\egReplayDevices}{RTX 5090 and RTX PRO 6000 Server}
\newcommand{\egValMaxFloorRatio}{1.26}
\newcommand{\egValMinFloorRatio}{1.00}
\newcommand{\egMerkleRoot}{8c19dd13728bd0de}
\newcommand{\egNumClaims}{70}
\newcommand{\egProofLen}{7}
\newcommand{\egContinuationLength}{2}
\newcommand{\egContinuationHead}{d1c61484bef20151}

\newcommand{\egContinuationServerFloorBig}{\num{1.99e-04}}
\newcommand{\egContinuationBigN}{32768}
\newcommand{\egResurrectPresent}{H200 NVL}
\newcommand{\egResurrectDead}{RTX PRO 6000 Blackwell Server Edition}
\newcommand{\egResurrectFraction}{100}
\newcommand{\egResurrectFloorRecovered}{20/20}
\newcommand{\egResurrectClassRecovered}{8/8}
\newcommand{\egResurrectLost}{8}
\newcommand{\egFingerprintKey}{b552f0f2c1e1e43c}
\newcommand{\egFingerprintDist}{1.3}
\newcommand{\egFingerprintTol}{10}
\newcommand{\egFingerprintBW}{7013}
\newcommand{\egFingerprintMem}{144}
\newcommand{\egBTwoHundredTput}{3324}

\newcommand{\egSeamRatio}{16}
\newcommand{\egSeamCacheRate}{662}
\newcommand{\egSeamHbmRate}{179}
\newcommand{\egSeamCrossRate}{43}
\newcommand{\egSeamPressureStable}{stable}

\newcommand{\egEvadeRel}{50}
\newcommand{\egEvadeRho}{\num{2.36e-04}}
\newcommand{\egEvadeWindowFpSixteen}{0.1}
\newcommand{\egEvadeWindowFpEight}{0.5}
\newcommand{\egFiatShamirRho}{\num{6.36e-01}}
\newcommand{\egWitnessRho}{1.43}
\newcommand{\egNumWitness}{4}
\newcommand{\egCommittedFooled}{4/4}
\newcommand{\egFsCaught}{4/4}
\newcommand{\egMwEvadeRel}{50}
\newcommand{\egStressGemms}{\num{139264}}
\newcommand{\egStressMinutes}{14}

\newcommand{\egStressPeakTemp}{80}
\newcommand{\egStressPeakPower}{588}
\newcommand{\egStressDidtPower}{537}
\newcommand{\egStressCap}{575}

\newcommand{\egStressInflationHot}{1.00}
\newcommand{\egStressInflationDidt}{1.00}
\newcommand{\egStressErrors}{0}

\title{\bfseries Self-Verifying Measurement Records:\\[2pt]
Hash-Linked Evidence Graphs for Hardware Benchmarking}

\author{%
  Faruk Alpay\thanks{Corresponding author: \texttt{alpay@lightcap.ai}.}\quad Bar\i{}\c{s} Ba\c{s}aran\\[2pt]
  \normalsize Department of Computer Engineering, Bah\c{c}e\c{s}ehir University, Istanbul, Turkey\\
  \normalsize \texttt{\{faruk.alpay, baris.basaran\}@bahcesehir.edu.tr}
}
\date{}

\begin{document}
\maketitle

\begin{abstract}
\noindent
Performance numbers reported for hardware are accepted on trust: the reader cannot recompute them,
the apparatus is gone, and the silicon itself can be silently wrong, with fleet studies reporting on
the order of one core in a thousand returning incorrect arithmetic with no error raised. We make a
reported hardware measurement a tamper-evident, independently checkable record. Every quantity that
appears in the text, a table, or a figure is bound, by its content hash, to the observation and the
verification behind it; the whole is a hash-linked, append-only structure, a transparency log for
measurement, that a verifier audits offline without trusting its producer. Matrix products are
verified by a probabilistic identity at $O(kn^2)$ cost under a tolerance we derive from
floating-point error analysis and calibrate to the device's own measured residual floor, so a wrong
product is rejected with probability $1-2^{-k}$; quantities with no such identity carry an algebraic
checksum and a measured reproducibility class, which together catch a fault that recurs identically
on every run. We then treat the check itself as a security object. A probe seed committed for offline
reproducibility is an attack surface: a probe-aware adversary hides a corruption of \egEvadeRel\% of
the output in the probe's null space, and a coordinated version passes a half-wrong result through
a quorum of \egNumWitness{} bit-identical witnesses at once; a Fiat-Shamir challenge derived from the
claimed output closes both while keeping the record offline. Driving the device from an unprivileged
tenant's reach, with a $di/dt$ power virus and a thermal soak, neither moves the calibrated tolerance
nor produces a silent error, which places the physical fault threat at the rare defective part or the
privileged attacker and marks the boundary at which the record must compose with a hardware root of
trust. We demonstrate the construction across Blackwell and Hopper accelerators, and report a measured residual-floor and reproducibility map by
precision, size, and device, including that the floor the tolerance rests on is an
architecture-independent invariant under which a retired device's record is reconstructed on a
successor of a different architecture, and a per-device signature that the part re-derives on demand.
\end{abstract}

\section{Introduction}
A reader who meets ``the kernel sustains \SI{1.7}{\peta\flop\per\second}'' cannot check it. The
number is the end of a computation whose inputs, the device, the driver, the exact shapes, and the
moment-to-moment clock, are gone by the time the sentence is read, and on shared, thermally varying
hardware the same run need not return the same value. Worse, the hardware can be wrong without
saying so: large-fleet studies find a small but steady fraction of cores, of order one in a
thousand, that produce incorrect results with no error raised~\cite{hochschild2021cores,dixit2021silent},
and accelerators are not exempt. A reported performance number is therefore detached from two things
at once: the evidence that it was measured, and the evidence that the underlying computation was
even correct.

The expectation that a stated number carry its evidence is no longer only a matter of taste.
Scientific data are asked to be findable and reusable in ways that let others check
them~\cite{wilkinson2016fair}; international climate accounting is built on a measure, report, and
verify spine in which a figure is only as good as the trail behind it~\cite{unfccc2015paris}; and
recent rules on high-impact computing systems require that records of what ran be
kept~\cite{eu2024aiact}. We take the same stance for a benchmark: a measurement is reported well
only when each number it states carries the evidence for itself, and that evidence should be
checkable by a reader holding nothing but the paper's own archive. We treat this as an integrity
problem rather than a presentational one. The record must hold not only against accidental error but
against a producer with reason to misreport, so we ask of it what is asked of a transparency log or a
verifiable computation: that a verifier need not trust the party that produced the
data~\cite{laurie2014ct,crosby2009tamper}, and we make the underlying hardware itself the witness and
the signer.

\paragraph{Approach.}
We attach to every reported quantity the content hash of the record that produced it, and we attach
to that record a verification chosen by what the quantity admits.
A \emph{linear} quantity, the entry of a matrix product, is checked by a probabilistic identity:
for a claimed $C=AB$ and a random probe $x$, the residual $\lVert A(Bx)-Cx\rVert$ is small only if
$C$ is correct, and a wrong $C$ is rejected with probability at least $1-2^{-k}$ over $k$ probes, at
$O(kn^2)$ cost rather than recomputing the $O(n^3)$ product~\cite{freivalds1977}. What makes this a
measurement rather than a textbook check is the tolerance: we derive it from the rounding-error
analysis of the product~\cite{higham2002accuracy,higham2019probabilistic} and calibrate it to the
device's own measured residual floor, so that ordinary low-precision roundoff is admitted while a
genuine error is not. A \emph{nonlinear} quantity, a reduction or an attention output, has no such
cheap algebraic check, algorithm-based fault tolerance is known not to extend to it~\cite{huang1984abft},
so its record instead carries the measured run-to-run reproducibility of the kernel: bit-stable, or
bounded by a measured divergence. The reported throughput itself carries its own class: a tightly
bounded rate, or one that tracks the device's clock and thermal state.

Observations, verifications, reductions, the displayed quantities, and the compiled document form
one hash-linked acyclic graph (Figure~\ref{fig:graph}). Verifying a number is a local act: re-hash
the nodes on its path and recompute the small functions that join them. The whole graph is audited
in a single offline pass over one archive, with no device and no network (Section~\ref{sec:graph}).

\paragraph{Multi-stage verification.}
Verification is not one decision but a short transcript the record commits to, stage by stage:
acquire a result, witness its residual and the calibrated tolerance, decide, and on a flagged result
repair by recomputing on a higher-assurance path and re-witness (Section~\ref{sec:linear}). We show
two instances on the instrument. An FP8 product's residual exceeds a tolerance calibrated at FP16;
the record flags it and a higher-precision recomputation brings it back within tolerance. A single
bit flipped in a correct product drives the residual far past tolerance; the identity check rejects
it with the stated probability, and a recomputation repairs it. These are the loop of an error the
data reveals, a fix, and a re-check, recorded in full.

\paragraph{Cross-device corroboration.}
With two cards of the same model the record gains an independent check. The two devices compute the
same product from identical inputs, and the difference between their outputs is itself a measured
quantity: writing $\mathcal{D}[q]=q(\text{card}_1)-q(\text{card}_0)$ for the difference of a quantity
across the two parts, we find $\mathcal{D}$ of the output is exactly zero for the deterministic
kernels, so each card corroborates the other bit for bit, while $\mathcal{D}$ of the throughput is a
small but consistent nonzero signature that two identical parts still carry (Section~\ref{sec:cross}).
A fault on one card breaks the agreement and is caught by the disagreement, the oldest form of a
check, two witnesses, here measured rather than assumed.

\paragraph{Hardware platform.}
We run on two \egDevice{} cards, drive them to their documented limits in bandwidth, board power, and
resident memory, and then measure what is not documented: the residual floor of their matrix products
as a function of precision and size, which is exactly the quantity the tolerance is calibrated
against. Blackwell has no detailed public timing model, and recent work characterizes it only through
microbenchmarks~\cite{blackwell2025dissect,blackwell2025microbench}; the floor and the per-card
signature we report are of that kind. To check that we have not merely described one workstation, we
also replay the same checks on a single \egValDevice{} with \egValCardMemGiB{}\,GiB memory and a
single \egValADevice{} with \egValACardMemGiB{}\,GiB, spanning consumer and server product
configurations~\cite{nvidia2026rtx5090,nvidia2026rtxpro6000server}. Nothing in the method is specific
to these parts, and the same record could be built for any accelerator that computes the same kinds
of quantity, down to a future device whose internals differ entirely from a GPU. The point of a
self-verifying record is finally a human one: a reader should trust a reported number to exactly the
degree its evidence warrants, which is the calibration that separates appropriate reliance from
misplaced faith~\cite{lee2004trust}, and which a number with no attached evidence cannot support.

\paragraph{Contributions.}
\begin{itemize}\itemsep2pt
  \item A hash-linked evidence graph binding every displayed hardware-benchmark quantity to its
  observation and its verification, audited in one offline pass; tamper-evident under collision
  resistance, linear-time checkable (Section~\ref{sec:graph}).
  \item A verification for linear quantities: an identity check with a tolerance derived from
  rounding-error analysis and calibrated to the device's measured residual floor, with a detection
  guarantee, recorded as a multi-stage transcript that reveals, repairs, and re-checks
  (Section~\ref{sec:linear}).
  \item A measured account of what the instrument admits: the residual floor of Blackwell matrix
  products by precision and size, and a reproducibility class for every workload, including atomic
  kernels whose run-to-run output we find nondeterministic by a measured margin
  (Section~\ref{sec:repro}).
  \item An algebraic check for the nonlinear quantities: attention decomposed into its two matrix
  products and a softmax invariant, atomic accumulation checked by a weighted checksum, so a
  consistent fault that run-to-run divergence cannot see is still caught (Section~\ref{sec:nonlinear}).
  \item A two-device differential: identical-model cards that agree bit for bit on deterministic
  output, disagree under an injected fault, and carry a small consistent throughput signature, which
  turns a second card into an independent witness and a per-part fingerprint (Section~\ref{sec:cross}).
  \item Two out-of-sample single-device replays, on \egReplayDevices{}: the same seeded checks across
  consumer and server configurations, used to separate method-level invariants from device-specific
  performance (Section~\ref{sec:validation}).
  \item Re-verification that is portable and sub-linear: inputs from committed seeds let a reader
  re-check on any device, including a CPU, and a Merkle commitment over the claims audits any single
  number in $O(\log N)$ (Section~\ref{sec:archive}).
  \item A directed continuation across devices: each instrument's record commits to the hash of the
  record it extends, so the evidence advances in one direction, a later device only appends, and an
  earlier head stays valid as the record grows (Section~\ref{sec:validation}).
  \item What survives when an instrument is gone: from a retired device's seeds, a living one of a
  different architecture re-derives \egResurrectFraction\% of its record, its transferable invariants,
  while its device-local rates are lost; a recallable signature the device answers for itself; and a
  random-access reading that exposes a \egSeamRatio$\times$ cross-die penalty the part's streaming
  numbers hide (Section~\ref{sec:gone}).
  \item A security analysis of the check itself: a committed probe lets a probe-aware adversary pass a
  product wrong by \egEvadeRel\% of its norm, and a coordinated version fools a quorum of
  \egNumWitness{} bit-identical witnesses at once, which a Fiat-Shamir challenge derived from the
  output closes on every witness while keeping the record offline and reproducible (Section~\ref{sec:attack}).
  \item A complete archive, the code, the \egNumObservations{} observations, the verification
  transcript, the graph, and a hash manifest, verifying to the root \hx{\egRootHash{}\ldots} with a
  standard-library checker (Section~\ref{sec:archive}).
\end{itemize}

\FloatBarrier
\section{A hash-linked evidence graph}\label{sec:graph}

\subsection{Nodes and content addressing}
Let $H$ be SHA-256~\cite{nist2015sha} over a canonical byte image: a JSON object with keys in
lexicographic order, compact separators, UTF-8 text, every float written as the shortest decimal
that round-trips to the same IEEE-754 double, and a terminating newline. Canonicalization gives a
value exactly one byte image, so two parties with the same content compute the same identifier.

\begin{definition}[Evidence graph]
An \emph{evidence graph} $G=(V,E)$ has four kinds of vertex.
\emph{(i)} An \emph{observation} $o$ is the canonical record of one measurement: a workload or a
verification step, its configuration, the environment digest, the device covariates sampled during
it, the numerical fingerprint of the output, and the residual where one is computed;
$\mathrm{id}(o)=H(\mathrm{bytes}(o))$.
\emph{(ii)} A \emph{reduction} $r=(f,\theta,[\mathrm{id}(u_i)],v)$ applies a named pure function $f$
to inputs named only by their identifiers and commits to $v=f(u_1,\dots;\theta)$.
\emph{(iii)} A \emph{claim} $c$ is a quantity as rendered in the document: a value, a unit, a
tolerance, a class, and the one identifier it asserts.
\emph{(iv)} The \emph{root} names the claim identifiers the document displays. Each edge
$u\!\to\!v$ stores $\mathrm{id}(v)$; $G$ is acyclic with the layering
$\text{root}\to\text{claims}\to\text{reductions}\to\text{observations}$.
\end{definition}

Figure~\ref{fig:graph} shows the graph for one workload: observations enter on the left, a fixed
set of reductions reads them by hash and produces the median rate, its dispersion, the rank
correlation of rate against clock, the run-to-run divergence, and the verification residual; the
claims in the prose name those reductions; the document seals the claims. Every arrow is a hash.

\begin{figure}[H]
\centering
\begin{tikzpicture}[
  font=\small,
  >={Stealth[length=2.1mm]},
  obs/.style={draw=obsline, fill=obsfill, rounded corners=2pt, minimum width=23mm,
              minimum height=6.4mm, blur shadow={shadow blur steps=5, shadow xshift=0pt,
              shadow yshift=-1pt, shadow opacity=28}},
  red/.style={draw=redline, fill=redfill, rounded corners=2pt, minimum width=29mm,
              minimum height=6.2mm, blur shadow={shadow blur steps=5, shadow yshift=-1pt,
              shadow opacity=28}},
  clm/.style={draw=clmline, fill=clmfill, rounded corners=2pt, minimum width=31mm,
              minimum height=7mm, align=center, blur shadow={shadow blur steps=5,
              shadow yshift=-1pt, shadow opacity=28}},
  doc/.style={draw=docline, fill=docfill, rounded corners=2pt, minimum width=23mm,
              minimum height=11mm, align=center, blur shadow={shadow blur steps=5,
              shadow yshift=-1pt, shadow opacity=28}},
  hedge/.style={->, draw=soft, thick},
  hlab/.style={font=\scriptsize, soft, fill=white, inner sep=1pt},
]
\node[soft] at (0,2.7) {\bfseries\footnotesize observations};
\node[soft] at (4.1,2.7) {\bfseries\footnotesize reductions};
\node[soft] at (8.3,2.7) {\bfseries\footnotesize claims};
\node[soft] at (11.7,2.7) {\bfseries\footnotesize document};

\node[obs] (o1) at (0,1.8) {$o_1$};
\node[obs] (o2) at (0,0.95) {$o_2$};
\node[obs] (o3) at (0,0.1) {$o_3$};
\node[soft] (od) at (0,-0.6) {$\vdots$};
\node[obs] (oR) at (0,-1.35) {$o_{R}$};
\node[soft, font=\scriptsize] at (0,-2.05) {\egMaxRepeats{} repeats};

\node[red] (med) at (4.1,1.7) {$\Sigma$~~median rate};
\node[red] (disp) at (4.1,0.75) {$\Sigma$~~dispersion, $\rho_{\mathrm{clk}}$};
\node[red] (ndr) at (4.1,-0.2) {$\Sigma$~~run-to-run div.};
\node[red] (fr) at (4.1,-1.15) {$\Sigma$~~Freivalds resid.};

\node[clm] (cp) at (8.3,1.15) {\emph{rate} $=$ \egGemmBfOneSixMedian~T\\[-1pt]\textcolor{cS2}{\class{S2}}};
\node[clm] (cn) at (8.3,-0.95) {\emph{verified}\\[-1pt]\textcolor{cS0}{\class{S0}}, $\rho\!\le\!\tau$};

\node[doc] (doc) at (11.7,0.1) {document\\[2pt]\code{main.pdf}};

\foreach \o in {o1,o2,o3,oR}{
  \draw[hedge] (\o.east) -- (med.west);
  \draw[hedge, draw=soft!55] (\o.east) -- (fr.west);
}
\draw[hedge, draw=soft!55] (o2.east) -- (disp.west);
\draw[hedge, draw=soft!55] (o3.east) -- (ndr.west);
\draw[hedge] (med.east) -- (cp.west);
\draw[hedge, draw=soft!55] (disp.east) -- (cp.west);
\draw[hedge, draw=soft!55] (ndr.east) -- (cn.west);
\draw[hedge] (fr.east) -- (cn.west);
\draw[hedge] (cp.east) -- (doc.west);
\draw[hedge] (cn.east) -- (doc.west);

\node[hlab] at (1.95,2.05) {\hx{9f3a\ldots}};
\node[hlab] at (6.15,1.92) {\hx{c41e\ldots}};
\node[hlab] at (10.55,1.05) {\hx{7b02\ldots}};
\end{tikzpicture}
\caption{The evidence graph for one workload. Each arrow carries the content hash of the node it
points at (one prefix shown per layer). A reader verifies a displayed number by re-hashing the
observations, recomputing the reductions, and checking the claim against the value it names.}
\label{fig:graph}
\end{figure}

\subsection{Offline audit}
\begin{definition}[Audit pass]
Given the archived graph, the audit \emph{(a)} recomputes $\mathrm{id}(x)$ for every node and checks
it against the stored identifier; \emph{(b)} checks every edge points at a present node;
\emph{(c)} recomputes each reduction from its named inputs and checks the committed value.
\end{definition}

\begin{lemma}[Tamper-evidence]\label{lem:tamper}
If any byte of any node is altered, the audit fails unless an explicit SHA-256 collision is exhibited.
\end{lemma}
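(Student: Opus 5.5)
The plan is a standard Merkle-style argument: induct along the layering $\text{root}\to\text{claims}\to\text{reductions}\to\text{observations}$ and show that any altered node forces a failure in step (a), (b) or (c) of the audit, unless its new byte image hashes to the same value as the old one. The auditor is assumed to hold the root identifier out of band, for example as printed in the document or as the archive's manifest root. Without that trusted anchor an adversary can simply rewrite the whole graph consistently, so I would state the assumption explicitly as part of the lemma's reading.

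\textbf{Step 1: canonicalization.} The canonical byte image is injective on content. Distinct contents give distinct byte strings, and a byte change either leaves a valid canonical image or is caught because re-serialization differs. So altering any byte of a node $x$ yields a node $x'$ with $\mathrm{bytes}(x')\neq\mathrm{bytes}(x)$.

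\textbf{Step 2: the local hash check.} Step (a) recomputes $H(\mathrm{bytes}(x'))$ and compares it with the identifier stored for $x$. The adversary then faces two cases.
\begin{itemize}
  \item \emph{Identifier left unchanged.} Then either (a) fails, or $H(\mathrm{bytes}(x'))=H(\mathrm{bytes}(x))$. In the latter case the pair $(\mathrm{bytes}(x),\mathrm{bytes}(x'))$ is an explicit SHA-256 collision.
  \item \emph{Stored identifier also changed to $\mathrm{id}(x')$.} Every parent $p$ whose edge stores $\mathrm{id}(x)$ now points at an identifier no node carries, so (b) fails. The only way to avoid this is to rewrite $p$'s edge, which alters a byte of $p$.
\end{itemize}
The same also covers a reduction whose inputs are changed. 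If the inputs are altered but the stored value $v$ is kept, (c) recomputes $f(u_1',\dots;\theta)$ and may disagree. The hash argument does not need this, but it is a second line of defence.

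\textbf{Step 3: induction up the DAG.} Because $G$ is acyclic and layered, every alteration propagates to an altered parent one layer up, and eventually to the root. The root's identifier is compared against the anchored value, so at that point case one of Step 2 applies: a collision is exhibited or the audit fails. Formally, I would take an altered node of maximal layer and apply Step 2 to it.

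\textbf{Collision extraction.} Following the chain gives, at the first layer where stored and recomputed hashes coincide despite different bytes, two distinct preimages with equal SHA-256 values. That pair is the exhibited collision.

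The main obstacle is not the hashing itself but pinning down the trust anchor and the canonicalization injectivity. The first thing I would try is to fix the root hash (\hx{\egRootHash{}\ldots}) as the externally published value, and to treat any non-canonical byte image as an immediate audit failure, so that ``altered byte'' and ``altered canonical content'' coincide.
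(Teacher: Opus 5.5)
Your proposal is correct and follows the paper's own argument: the altered node fails step (a) unless its old and new byte images share a SHA-256 hash, and updating its referrers to match forces edits that propagate up to the root. You also make explicit what the paper leaves tacit: the root identifier must be anchored out of band, since otherwise a consistent rewrite of the whole graph passes (a)--(c), and non-canonical byte images must be rejected; your propagation step still shares the paper's unstated assumption that every node is reachable from the root.
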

\begin{proof}
Altering a node changes its recomputed identifier, so step (a) fails for it unless the altered and
original bytes share a hash. Editing a referrer to match an altered target changes the referrer's
identifier, and the failure propagates upward to the root. Each step needs a preimage or collision
to avoid detection.
\end{proof}

\begin{theorem}[Single-pass offline audit]\label{thm:single}
The graph is auditable in time linear in its serialized size using only the archive, with no device,
network, or re-execution.
\end{theorem}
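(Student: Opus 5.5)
The plan is to bound the cost of each of the three audit steps (a)--(c) of the audit-pass definition separately, charging each unit of work to bytes of the serialized archive, and to observe that none of them reads anything outside the archive. Write $S=\sum_{x\in V}|\mathrm{bytes}(x)|$ for the serialized size. Since every edge $u\to v$ is stored as $\mathrm{id}(v)$ inside $\mathrm{bytes}(u)$, and each identifier occupies a fixed 32 bytes, we have $|E|\le S/32$. Likewise $|V|\le S$, because every node has a nonempty canonical image.

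For step (a), SHA-256 processes its input in 64-byte blocks with constant work per block. Re-hashing node $x$ therefore costs $O(|\mathrm{bytes}(x)|)$, and summing over the nodes gives $O(S)$. The canonical byte image is read directly from the archive, so no re-canonicalization is needed. Even if the auditor chooses to re-canonicalize, that operation is a single parse and emit, which is linear apart from sorting the keys. For the sorting cost I would argue that the schema fixes the key sets, so they are bounded, or else I would charge the sort to the key bytes.

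For step (b), I would first build a hash table from identifiers to nodes in one pass, at expected $O(1)$ per insertion. Each stored edge then requires one lookup, so this step costs $O(|E|)=O(S)$. The same table also exposes the layering of root, claims, reductions and observations. From it, acyclicity follows by checking that every edge goes to a strictly lower layer, which is $O(1)$ per edge. This removes the need for a general cycle search, although a DFS would also be linear if the layering were not available.

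Step (c) is the main obstacle, because linearity there depends on the named pure functions $f$. My plan is to state explicitly that the admissible reductions in the fixed set of Figure~\ref{fig:graph} (median, dispersion, rank correlation, run-to-run divergence, residual comparison) each run in time linear or quasi-linear in the size of their inputs. The median is computable in linear time by selection. The rank correlation requires sorting, so I would either accept $O(m\log m)$ with $m\le$ \egMaxRepeats{} repeats, a constant, or note that it is linear in the input size up to that constant bound. Each reduction reads its inputs only by identifier through the table built in step (b), and it reads only committed scalar fields such as rates, clocks and residuals. It does not read any matrices, which is what rules out re-execution: the Freivalds residual is checked as a committed number against $\tau$, and the $O(kn^2)$ product is not recomputed. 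Summing over reductions, the total work is bounded by the total size of the referenced inputs. Because in-degree can repeat inputs, I would bound that total by $|E|$ times the maximum field size, which gives $O(S)$.

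Combining the three steps yields $O(S)$ time overall. By Lemma~\ref{lem:tamper}, passing steps (a)--(c) certifies the graph up to a SHA-256 collision. Every input consumed along the way came from the archive, so the audit needs no device, network or re-execution.
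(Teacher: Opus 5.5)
Your proposal is correct and follows the same route as the paper's proof: each node is hashed once, each reduction is a cheap pure function of committed archived values, and nothing consults a device or regenerates a product. Your version is more careful than the paper's two-line sketch. You bound $|E|$ by the serialized size, charge repeated fan-in to the stored edges, and confront the sorting hidden in the rank correlation, which the paper simply calls linear; a radix sort on the IEEE-754 bit patterns would remove your reliance on the fixed repeat count.
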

\begin{proof}
Each node is hashed once; each reduction reads inputs already present and applies a pure function
linear in their number. The reductions are medians, dispersions, rank correlations, divergences, and
residual aggregates, all pure functions of archived values; none consults the device.
\end{proof}

A reduction commits to its value, so the auditor recomputes rather than trusts. The reductions are
deliberately simple and use only the standard library, so the check of Section~\ref{sec:archive}
depends on nothing a reader must install.

\section{Verifying linear quantities}\label{sec:linear}

A matrix product is the heaviest quantity these workloads report and the one that admits a cheap
exact-in-expectation check. For a claimed $C=AB$ with $A,B,C\in\mathbb{R}^{n\times n}$ and a probe
matrix $X\in\mathbb{R}^{n\times k}$ with independent standard-normal columns, write the normalized
residual
\[
  \rho(C) \;=\; \frac{\bigl\lVert A(BX)-CX\bigr\rVert_\infty}{\lVert CX\rVert_\infty+\varepsilon},
\]
formed in float32 so the check is more accurate than the products it inspects, at $O(kn^2)$ cost.

\begin{lemma}[Identity check, after Freivalds]\label{lem:freivalds}
If $AB\neq C$ then a single random probe satisfies $A(Bx)=Cx$ with probability at most $\tfrac12$,
so over $k$ independent probes the disagreement is missed with probability at most
$2^{-k}$~\cite{freivalds1977}. If $AB=C$ the residual is zero in exact arithmetic.
\end{lemma}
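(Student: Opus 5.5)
The proof reduces everything to the difference matrix $D = AB - C$. I would first note that $A(Bx)-Cx = Dx$ by associativity and linearity, so that $A(Bx)=Cx$ holds exactly when $Dx=0$. The case $AB=C$ is then immediate: $D=0$, so $Dx=0$ for every probe and the residual vanishes in exact arithmetic.

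For $AB\neq C$, pick an index pair $(i,j)$ with $D_{ij}\neq 0$. The event $Dx=0$ forces the $i$-th row equation
\[
  \sum_{l} D_{il}x_l \;=\; D_{ij}x_j + \sum_{l\neq j} D_{il}x_l \;=\; 0 .
\]
I would condition on all coordinates $x_l$ with $l\neq j$. The equation then pins $x_j$ to a single value, namely $x_j=-D_{ij}^{-1}\sum_{l\neq j}D_{il}x_l$.

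How much that costs depends on the probe distribution.
\begin{itemize}
\item For Freivalds' original distribution, with $x_j$ uniform on $\{0,1\}$ and independent of the other coordinates, at most one of the two values satisfies the equation. The conditional probability is therefore at most $\tfrac12$, and averaging over the conditioning preserves this bound.
\item For the standard-normal probes actually used in this section, the argument gives something stronger. $x_j$ has a continuous density independent of the remaining coordinates, so hitting a single point has conditional probability $0$. Equivalently, $\{x: Dx=0\}$ is a proper subspace and has Gaussian measure zero. The bound $\tfrac12$ holds a fortiori.
\end{itemize}
I would state both cases so that the lemma covers the citation to \cite{freivalds1977} as well as the paper's setting.

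For $k$ probes, the columns of $X$ are independent. The event that all $k$ probes are fooled, i.e.\ $DX=0$, is the intersection of $k$ independent events, each of probability at most $\tfrac12$. Its probability is thus at most $2^{-k}$ (and is $0$ under Gaussian probes).

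The only delicate point is not mathematical but interpretive. The lemma is about exact arithmetic, while the check $\rho(C)$ is computed in floating point against a tolerance. I would therefore state explicitly that the lemma controls only the exact identity $Dx=0$. Its transfer to the thresholded residual, where a nonzero but small $Dx$ could fall below $\tau$, is deferred to the tolerance analysis that follows. Keeping the proof confined to the exact statement avoids overclaiming.
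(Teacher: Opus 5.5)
Your proof is correct, and it supplies something the paper does not: the paper states this lemma without proof and only cites Freivalds. Your first bullet is the standard argument behind that citation: fix $D_{ij}\neq 0$, condition on the other coordinates, and observe that at most one of the two values of $x_j$ solves the $i$-th row equation.

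Your second bullet is the case the paper actually needs. Its probe columns are standard normal rather than uniform on $\{0,1\}$, so the cited $\tfrac12$ bound does not literally cover them. Your observation that $\ker D$ is a proper subspace of Gaussian measure zero does cover them. It also shows the exact miss probability is $0$, so the stated $\tfrac12$ and $2^{-k}$ are valid but loose.

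Your closing caveat is well placed too. Because the exact event is null under Gaussian probes, the lemma says nothing about $\lVert Dx\rVert_\infty$ falling below a positive tolerance. That probability is governed by how large $D$ is relative to $\tau$. Yet the proof of Proposition~\ref{prop:detect} invokes this lemma for exactly that thresholded step, so your scoping is the more careful reading.
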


In floating point the residual of a \emph{correct} product is not zero but a roundoff term. Standard
analysis bounds it by $\gamma_n=nu/(1-nu)$ times the operand norms, and the probabilistic refinement
replaces $\gamma_n$ by a constant growing like $\sqrt{n\log n}\,u$~\cite{higham2002accuracy,higham2019probabilistic},
where $u$ is the unit roundoff of the product's precision. We do not need the constant in closed
form; we measure it.

\begin{definition}[Device-calibrated tolerance]\label{def:tau}
Let $\bar\rho$ be the largest residual observed over the correct repeats of a kernel at a given
precision and size. The acceptance tolerance is $\tau=m\,\bar\rho$ for a fixed margin $m=\egVMargin$.
A claim is \emph{verified} iff $\rho\le\tau$.
\end{definition}

The floor $\bar\rho$ is a property of the silicon and the precision, not of the algorithm, and
Section~\ref{sec:repro} reports it for Blackwell. Calibrating to it is what separates legitimate
low-precision roundoff from a real error: a precision whose floor is high gets a correspondingly
loose tolerance, and a residual that exceeds even that loose bound is not roundoff.

\begin{proposition}[Calibrated detection]\label{prop:detect}
Fix a precision with measured floor $\bar\rho$ and tolerance $\tau=m\bar\rho$. A correct product is
accepted, since $\rho\le\bar\rho\le\tau$. A product whose deviation from $AB$ exceeds $\tau$ in a
probed coordinate is rejected with probability at least $1-2^{-k}$. Hence a verified linear claim
equals $AB$ to within $\tau$ with confidence $1-2^{-k}$.
\end{proposition}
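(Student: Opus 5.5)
The proof splits into the three assertions of Proposition~\ref{prop:detect}, and I would take them in order, reducing each to Definition~\ref{def:tau} or Lemma~\ref{lem:freivalds}.

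\emph{Completeness.} This is immediate from Definition~\ref{def:tau}. The floor $\bar\rho$ is defined as the largest residual over the correct repeats, so any correct repeat has $\rho\le\bar\rho$. Since $m=\egVMargin\ge 1$, we get $\bar\rho\le m\bar\rho=\tau$. I would state explicitly the assumption that a fresh correct product at the same precision and size falls inside the calibrated range. That is the empirical content the margin $m$ is meant to cover, and I would flag it as an assumption, not something to prove.

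\emph{Soundness.} I would split the residual into an error term and a roundoff term. Write $C=AB+\Delta$. Then
\[
A(BX)-CX=\bigl(A(BX)-ABX\bigr)-\Delta X ,
\]
and the first term is the float32 roundoff of the check. Bound this term by the same floor that calibrates $\tau$. Then argue that the deviation assumption, read in the normalized residual's units, forces $\lVert\Delta X\rVert_\infty/(\lVert CX\rVert_\infty+\varepsilon)$ above $\tau$ plus that roundoff whenever the probe ``sees'' the deviation. For the probability I would use the Freivalds argument of Lemma~\ref{lem:freivalds}, adapted to Gaussian probes. Pick a row $\delta^\top$ of $\Delta$ containing a probed coordinate where the deviation is large. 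For a standard-normal column $x$, the scalar $\delta^\top x$ is $\mathcal N(0,\lVert\delta\rVert_2^2)$. Its magnitude therefore falls below a threshold proportional to $\tau$ with probability at most $\tfrac12$, once the threshold is below the median $0.674\,\lVert\delta\rVert_2$ of $|\delta^\top x|$. Independence of the $k$ columns then gives a miss probability of at most $2^{-k}$, because the $\infty$-norm over the columns exceeds the threshold as soon as any one column does.

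\emph{Main obstacle.} The hard step is the quantitative link between ``deviation exceeds $\tau$ in a probed coordinate'' and ``the normalized residual exceeds $\tau$ with probability at least $\tfrac12$''. Lemma~\ref{lem:freivalds} is an exact-arithmetic, zero-versus-nonzero statement, and the normalization by $\lVert CX\rVert_\infty$ also depends on $X$. My first attempt would be to read the hypothesis as a condition on the probed quantity itself, namely that the normalized deviation $\lVert\Delta x\rVert_\infty$ in the probed coordinate exceeds $\tau$ plus the check's roundoff floor. Under that reading, each probe independently detects with probability at least $\tfrac12$ by the symmetric-median argument above. If that reading is too generous, I would instead absorb constant factors (the Gaussian median constant and the check's float32 roundoff) into the margin $m$.

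\emph{Conclusion.} The contrapositive gives the final sentence of the proposition. If a claim is verified ($\rho\le\tau$) and yet deviates from $AB$ by more than $\tau$, then that deviation was missed, which happens with probability at most $2^{-k}$. Hence a verified claim equals $AB$ to within $\tau$ with confidence $1-2^{-k}$.
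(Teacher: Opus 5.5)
Your completeness step matches the paper's (``the first statement is the calibration''). Flagging that a fresh correct product must land inside the calibrated range is a fair sharpening.

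For soundness you make the same first move as the paper: split the residual into $-\Delta X$ plus a float32 witness roundoff below $\bar\rho\le\tau$. Your probabilistic step is genuinely different.

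\begin{itemize}
\item The paper reuses Lemma~\ref{lem:freivalds}: $\lVert DX\rVert$ exceeds the witness roundoff for some column unless every column lies in a common measure-zero subspace, an event it bounds by $2^{-k}$. That is short and needs no new probability. But it controls only exact cancellation, which has probability zero for Gaussian probes, so the $2^{-k}$ is nominal. And it concludes only that the residual exceeds the roundoff, not that it exceeds $\tau$.
\item Your anti-concentration step, $\delta^\top x\sim\mathcal N(0,\lVert\delta\rVert_2^2)$ with $|\delta^\top x|$ of median $0.674\lVert\delta\rVert_2$, is what actually bounds the probability of crossing the acceptance threshold. You are right that a zero-versus-nonzero lemma cannot supply this.
\end{itemize}

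Three pieces of your route still need repair.

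\begin{itemize}
\item \textbf{Absorbing constants into $m$ cannot work.} The margin $m$ rescales $\tau$, which is both the rejection threshold and the deviation threshold of the hypothesis. Let $r<\bar\rho=\tau/m$ be the witness roundoff. The ratio you need, $(\tau+r)/(0.674\,\tau)$, is at least $1/0.674\approx1.48$ for every $m$. The constant must go into the hypothesis, as a row deviation $\lVert\delta\rVert_2>c\tau$ with $c=(1+1/m)/0.674\approx2$ at $m=3$. Equivalently it goes into the conclusion: a verified claim is within $c\tau$ of $AB$.

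Under the entrywise reading this is unavoidable. Take a lone entry of size $1.01\tau$, hold the normalizer fixed, and ignore roundoff. A column sees it only when the matching probe entry exceeds about $0.99$ in magnitude. So each column misses with probability about $0.68$, and $k=8$ columns miss with probability about $0.045$, not $2^{-8}$.

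\item \textbf{The normalizer breaks independence.} $\lVert CX\rVert_\infty+\varepsilon$ is shared by all $k$ columns and correlated with each $\delta^\top x_j$, since $C=AB+\Delta$. The per-column miss events therefore have no fixed threshold, and their probabilities do not simply multiply. You need to bound the normalizer deterministically, paying a small extra failure probability, or normalize per column before invoking independence.

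\item \textbf{Your ``first reading'' needs restating.} The condition must be placed on the median of the probed quantity, not on its realized value. Otherwise detection is certain and the probability statement is empty.
\end{itemize}
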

\begin{proof}
Acceptance is $\rho\le\tau$; the first statement is the calibration. For the second, write
$D=C-AB$; the check sees $DX$ up to the float32 roundoff of the witness, which is below $\bar\rho\le\tau$.
If $D$ deviates by more than $\tau$ in a probed coordinate, $\lVert DX\rVert$ exceeds the witness
roundoff for at least one column unless every column lies in the same measure-zero subspace, which by
Lemma~\ref{lem:freivalds} happens with probability at most $2^{-k}$ over the $k$ independent columns.
\end{proof}

\paragraph{The verification transcript.}
A verification is recorded as a short sequence of stages, each a node in the graph: \emph{acquire}
the device's result; \emph{witness} its residual $\rho$ and the tolerance $\tau$; \emph{decide}
(accept if $\rho\le\tau$); and, on a flagged result, \emph{repair} by recomputing on a
higher-assurance path and re-witness. The auditor of Section~\ref{sec:graph} checks that every
decision follows its rule and every stage links to the next by hash, so the transcript is itself
tamper-evident. Table~\ref{tab:verify} is the transcript of two demonstrations on the instrument.

\begin{table}[H]
\centering
\caption{Two verification transcripts on the instrument, calibrated at the FP16 floor with
$\tau=\egVtau$ and $k=\egK$ probes (miss probability $\le\egPmiss$). \emph{Precision}: an FP8 product
exceeds the high-precision tolerance and is repaired by recomputing with float32 accumulation.
\emph{Corruption}: a single injected bit flip in a correct FP16 product is rejected and repaired.}
\label{tab:verify}
\small
\begin{tabular}{lllrl}
\toprule
Case & Stage & Operation & Residual & Decision \\
\midrule
precision & acquire & FP8 product & \num{1.58e-03} & flag \\
 & repair & FP32-accumulation recompute & \num{2.74e-04} & accept \\
\midrule
corruption & acquire & FP16 product & \num{3.06e-04} & accept \\
 & inject & single bit flip in C & \num{2.01e-03} & reject \\
 & repair & recompute C & \num{3.06e-04} & accept \\
\bottomrule
\end{tabular}

\end{table}

The corruption row is the point of the construction: a residual of \egFlipResid{}, against a
tolerance of \egVtau, is not a question of roundoff, and Proposition~\ref{prop:detect} makes its
rejection a statement with a probability attached rather than a heuristic. The bit flip is injected
deliberately, to measure the detector's sensitivity rather than to claim such a deviation is common;
how often one arises on its own, and whether a tenant can physically provoke one, is taken up in
Section~\ref{sec:physical}. The precision row shows the same machinery distinguishing an underprecise
result, accepted by no high-precision reader, from a correct one, and recording the repair that fixes
it.

\section{Verifying nonlinear quantities}\label{sec:nonlinear}

The identity check covers matrix products. The other heavy quantities, an attention output (a softmax
over two matrix products) or an atomic accumulation, are nonlinear, and an algebraic fault-tolerance
check is known not to extend to them directly~\cite{huang1984abft}. Leaving them to their run-to-run reproducibility is
not enough: that detects only \emph{non}-reproducibility, so a fault that occurs identically on every
run, the kind a deterministic but wrong kernel produces, leaves the divergence at zero and passes. We
close this with a check that compares against meaning rather than against a previous run.

\paragraph{Attention.} An attention output is $O=\mathrm{softmax}(QK^{\top}/\sqrt{d})\,V$, two
matrix products around one softmax. The verifier forms the scores and the probabilities in float32,
checks the softmax invariant that each probability row sums to one (we measure a worst-case row-sum
error of \egAttRowsum), and compares the reference output to the claim. The two products are
themselves Freivalds-checkable; the softmax is the only genuinely nonlinear step, and it is pinned by
its invariant. This follows the recent line that brings algorithm-based fault tolerance to attention
layers~\cite{flashabft2025,fttransformer2025}, here as a recorded check rather than an inline one.

\paragraph{Atomic accumulation.} A scatter or index-add is a sum into buckets, and admits a cheap
one-pass check: for random weights $w$, $\sum_j w_j\,\mathrm{out}_j$ must equal
$\sum_i w_{\mathrm{idx}_i}\,\mathrm{src}_i$, the additive analogue of the Freivalds identity. It costs
a single pass rather than re-running the scatter.

\paragraph{A consistent fault is caught.} Table~\ref{tab:nonlinear} is the point. We inject a fault
that is identical on every run, so its run-to-run divergence is exactly zero and a reproducibility
check sees nothing wrong. The attention residual rises from a floor of \egAttFloor{} to \egAttFault{},
and the scatter residual from \egScatterFloor{} to \egScatterFault{}: both far above their floors, so
the algebraic check catches what the divergence misses. The \class{Snd} quantities of
Section~\ref{sec:repro} therefore carry an algebraic guard, not only a measured band.

\begin{table}[H]
\centering
\caption{An injected fault that is identical on every run leaves the run-to-run divergence at zero,
so a reproducibility check passes it; the algebraic residual rises far above its floor and catches it.}
\label{tab:nonlinear}
\small
\begin{tabular}{lrrrc}
\toprule
Operation & Residual floor & Consistent-fault & Run-to-run div. & Caught \\
\midrule
attention & \num{2.00e-03} & \num{2.22e-01} & 0 & \checkmark \\
scatter & \num{1.03e-07} & \num{7.95e-04} & 0 & \checkmark \\
\bottomrule
\end{tabular}

\end{table}

\section{Reproducibility classes and residual floors}\label{sec:repro}

\paragraph{Device and protocol.}
Measurements are on two \egDevice{} cards (\egSMCount{} SMs, \egCardMemGiB{}\,GiB GDDR7 each, compute
capability \egComputeCap{}), driver \egDriver{}, CUDA~\egCuda{}, PyTorch~\egTorch{}; the workload grid
runs on one card and the differential of Section~\ref{sec:cross} on both. Each workload runs a
sustained inner loop repeated \egMaxRepeats{} times under load while the SM clock, temperature, and
power are sampled every \SI{20}{\milli\second}. The part is documented with a high-bandwidth GDDR7
interface and a \SI{300}{\watt} board limit~\cite{nvidia2026rtxpro6000maxq}; we hold FP8 \textsc{gemm}
at \egGemmFpEightMedian~TFLOP/s under the board-power limit (Table~\ref{tab:taxonomy}), and a memory-bound
stream fills \egFillMem~GiB of each card (Section~\ref{sec:cross}).

\paragraph{The residual floor.}
The quantity that calibrates the tolerance is the Freivalds residual of a \emph{correct} product.
At $n=\egFloorSizes$, the measured sequences are \egFloorSeqFpSixteen{} for FP16,
\egFloorSeqTfThreeTwo{} for TF32, \egFloorSeqBfSixteen{} for BF16, and \egFloorSeqFpEight{} for
FP8, in increasing-size order. Each precision stays in a narrow, size-dependent band rather than
moving monotonically, while the ordering is stable: FP16 and TF32 remain below BF16 and FP8. To our
knowledge this floor has not been reported for Blackwell; it is the empirical content of
$\bar\rho$ in Definition~\ref{def:tau}, and it is why a tolerance calibrated at FP16 flags an FP8
product (Table~\ref{tab:verify}).

\paragraph{Reproducibility of nonlinear quantities.}
Where no cheap identity exists, the record carries the measured run-to-run reproducibility instead.
Table~\ref{tab:taxonomy} gives, for every workload, the numerical class, bit-stable across repeats
(\class{S0}) or bounded by a measured relative divergence (\class{Snd}), and the performance class,
tightly bounded (\class{S1}) or tracking the device clock (\class{S2}). The dense \textsc{gemm}
kernels, at all four precisions, and the streaming and reduction kernels return bit-identical output
across every repeat; the atomic scatter and index-add kernels do not, diverging run to run by
\egScatterNd{} and \egIndexNd{} in relative terms. This is the boundary that an algebraic
fault-tolerance check cannot draw~\cite{huang1984abft} and that the FP8 reproducibility question
leaves open~\cite{shanmugavelu2024fpna}: on this part, low-precision dense products are bit-stable,
while atomic accumulation is the kernel class that is not. A \class{Snd} claim carries its measured
divergence as its tolerance, exactly as a verified linear claim carries $\tau$, and, by
Section~\ref{sec:nonlinear}, also an algebraic check that a consistent fault cannot slip past.

\begin{table}[H]
\centering
\caption{What each workload admits on a single \egDevice{}. \emph{Median} is the sustained rate;
\emph{Disp.} the relative MAD over \egMaxRepeats{} repeats; $\rho_{\mathrm{clk}}$ the Spearman
correlation of rate against SM clock; \emph{Perf.} its class (\class{S1} bounded, \class{S2}
clock-tracking); \emph{Output} the numerical class (\class{S0} bit-stable, \class{Snd} bounded by the
measured divergence). \emph{Resid.} is the Freivalds residual for the linear (\textsc{gemm})
workloads; the final columns record median board power, temperature, and peak resident memory.}
\label{tab:taxonomy}
\small
\begin{tabular}{llrrrrccrrr}
\toprule
 & & \multicolumn{4}{c}{Reported quantity} & \multicolumn{2}{c}{Numerics} & \multicolumn{3}{c}{Device state} \\
\cmidrule(lr){3-6}\cmidrule(lr){7-8}\cmidrule(l){9-11}
Workload & Var. & Median & Disp.\,(\%) & $\rho_{\mathrm{clk}}$ & Perf. & Output & Resid. & W & $^\circ$C & GiB \\
\midrule
gemm & fp16 & 242.4\,T & 0.188 & +0.80 & \class{S2} & \class{S0} & \num{2.02e-04} & 300 & 90 & 4.5 \\
gemm & bf16 & 248.4\,T & 0.055 & +0.48 & \class{S1} & \class{S0} & \num{1.69e-03} & 300 & 89 & 4.5 \\
gemm & tf32 & 117.7\,T & 0.063 & -0.31 & \class{S1} & \class{S0} & \num{2.89e-04} & 300 & 88 & 4.0 \\
gemm & fp8 & 484.2\,T & 0.022 & -0.12 & \class{S1} & \class{S0} & \num{1.71e-03} & 300 & 89 & 4.0 \\
hbm\_triad & f32 & 1510.4\,G & 0.019 & +0.41 & \class{S2} & \class{S0} & -- & 300 & 85 & 60.0 \\
attention & bf16 & 224.9\,T & 0.596 & +0.29 & \class{S1} & \class{S0} & -- & 300 & 91 & 1.3 \\
reduction & default & 1635.4\,G & 0.004 & -0.07 & \class{S1} & \class{S0} & -- & 300 & 85 & 8.0 \\
reduction & deterministic & 1635.2\,G & 0.013 & +0.19 & \class{S1} & \class{S0} & -- & 300 & 85 & 8.0 \\
scatter\_add & atomic & 73.0\,Ge & 0.510 & +0.32 & \class{S1} & \class{Snd} & -- & 300 & 86 & 12.0 \\
index\_add & atomic & 72.8\,Ge & 0.073 & +0.44 & \class{S1} & \class{Snd} & -- & 299 & 87 & 12.0 \\
\bottomrule
\end{tabular}

\end{table}

\FloatBarrier
Figure~\ref{fig:drift} closes the performance side: throughput is flat against clock for a bounded
workload and sloped for one that tracks the boost clock, which is what the rank correlation in
Table~\ref{tab:taxonomy} reports, so the class is read from the covariate, not chosen.

\begin{figure}[H]
\centering
\includegraphics[width=0.80\linewidth]{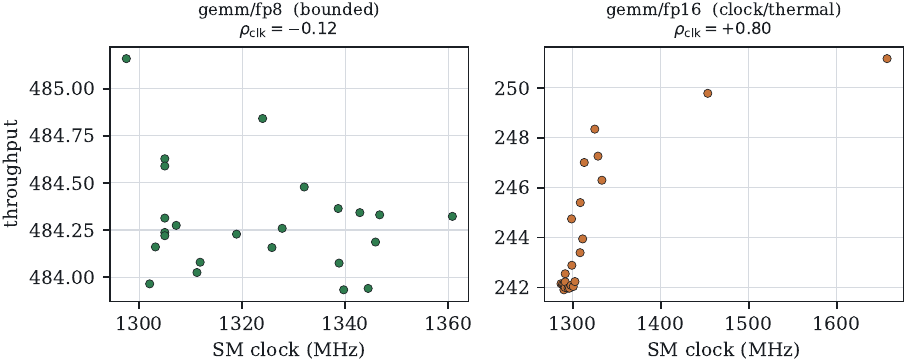}
\caption{Throughput against SM clock for a bounded (\class{S1}) and a clock-tracking (\class{S2})
workload. The slope is the evidence behind the class.}
\label{fig:drift}
\end{figure}

\section{Cross-device differential measurement}\label{sec:cross}

A single card witnesses its own product through the residual of Section~\ref{sec:linear}. A second
card of the same model witnesses it again, independently, and the difference between the two is a
measurement in its own right. For a quantity $q$ computed on each part write the device differential
\[
  \mathcal{D}[q] \;=\; q(\text{card}_1) - q(\text{card}_0),
\]
a finite difference along the one axis a benchmark usually holds fixed, the identity of the silicon.
We run both \egDevice{} cards (\egCardMemGiB{}\,GiB each) on the same inputs and read three differentials.

\paragraph{Output agreement.}
For every deterministic kernel, both cards return bit-identical results: the relative difference
$\mathcal{D}[\,\text{output}\,]$ of their matrix products is \egCrossDelta{} at every size and
precision (Table~\ref{tab:cross}). Two parts agreeing to the last bit is the strongest corroboration
a linear claim can have, an exact cross-check that needs no tolerance, and it holds because the dense
products are bit-stable (Section~\ref{sec:repro}) on both parts at once. The residual floors agree as
well: $\bar\rho$ is the same on the two cards, so the floor is a property of the architecture, not of
the individual die.

\paragraph{Throughput divergence.}
The same two cards do not run at the same speed. Their FP16 \textsc{gemm} throughputs differ by
\egTputDiffFpSixteen\% and their FP8 by \egTputDiffFpEight\%, consistently in the same direction, a
small fixed signature that identical models carry from manufacturing and die binning and that holds
steady across runs~\cite{sinha2022notall}. $\mathcal{D}[\,\text{throughput}\,]$ is thus a per-part
fingerprint sitting beside an output differential of zero: the cards are numerically the same machine
and physically not, and a record that reports a rate without naming the card has already lost
information the differential makes visible. Driven to fill their memory, each card holds
\egFillMem{}\,GiB resident while a memory-bound stream sustains \egFillTput{}~GB/s.

\paragraph{Fault detection by disagreement.}
When the two outputs agree bit for bit, a disagreement can only come from one side going wrong. We
inject a single bit flip into one card's product; the output differential jumps from \egCrossDelta{}
to \egWitnessDeltaCorrupt{}, and the corruption is caught by the discrepancy alone, while the other
card, and its Freivalds residual, corroborate the correct result. This is redundancy used as
evidence: a second witness that is silent when both are right and loud when one is not, recorded in
the same graph as everything else.

\begin{table}[H]
\centering
\caption{The differential between two \egDevice{} cards on identical inputs. $\rho_0,\rho_1$ are each
card's residual floor; $\mathcal{D}[\,\text{out}\,]$ is the relative difference of their outputs,
zero wherever the kernel is deterministic.}
\label{tab:cross}
\small
\begin{tabular}{lrrrr}
\toprule
Precision & $n$ & $\rho_0$ & $\rho_1$ & $\mathcal{D}[\mathrm{out}]$ \\
\midrule
fp16 & 4096 & \num{2.02e-04} & \num{2.02e-04} & 0 \\
fp16 & 8192 & \num{1.93e-04} & \num{1.93e-04} & 0 \\
fp16 & 16384 & \num{2.02e-04} & \num{2.02e-04} & 0 \\
bf16 & 4096 & \num{1.78e-03} & \num{1.78e-03} & 0 \\
bf16 & 8192 & \num{1.53e-03} & \num{1.53e-03} & 0 \\
bf16 & 16384 & \num{1.69e-03} & \num{1.69e-03} & 0 \\
tf32 & 4096 & \num{3.56e-04} & \num{3.56e-04} & 0 \\
tf32 & 8192 & \num{2.86e-04} & \num{2.86e-04} & 0 \\
tf32 & 16384 & \num{2.89e-04} & \num{2.89e-04} & 0 \\
fp8 & 4096 & \num{1.78e-03} & \num{1.78e-03} & 0 \\
fp8 & 8192 & \num{1.61e-03} & \num{1.61e-03} & 0 \\
fp8 & 16384 & \num{1.71e-03} & \num{1.71e-03} & 0 \\
\bottomrule
\end{tabular}

\end{table}

\FloatBarrier

\section{Out-of-sample device validation}\label{sec:validation}

The two-card result above is a differential inside one SKU. A stronger check changes the product
configuration itself. We therefore replay the same seeded products, residual witnesses, and
reproducibility measurements on two single-card instruments. The first is an \egValDevice{}:
\egValSMCount{} SMs, \egValCardMemGiB{}\,GiB of usable memory, driver \egValDriver{}, and
PyTorch~\egValTorch{}~\cite{nvidia2026rtx5090}. The second is an \egValADevice{}:
\egValASMCount{} SMs, \egValACardMemGiB{}\,GiB, driver \egValADriver{}, and
PyTorch~\egValATorch{}~\cite{nvidia2026rtxpro6000server}. Both report compute capability
\egValComputeCap{} and CUDA~\egValCuda{}, but they span consumer and server power/memory
configurations. Equality of rates is therefore neither expected nor the claim. The transfer question
is whether the \emph{record} keeps the same semantics when the instrument changes.

For device $d$, precision $p$, and size $n$, write
\[
  F_d(p,n)=\rho_d(p,n), \qquad
  K_d(w)=(\text{performance class},\text{numerical class})
\]
for the residual floor and the class pair assigned to workload $w$. A device replay is
evidence-preserving when the same pure reductions recompute, the linear claims use the same rule
$\rho_d\le mF_d(\mathrm{fp16},n_{\mathrm{ref}})$ with a fresh device floor, and the numerical class
boundary $K_d$ changes only where the measured divergence changes. This is weaker than bit-exact
cross-device replay and stronger than a speed comparison: it asks whether the evidence language has
the same meaning after the instrument changes.

\begin{table}[H]
\centering
\caption{Out-of-sample residual floors on \egReplayDevices{} compared with the primary \egDevice{}
instrument at the same reference size. Parentheses give each replay floor divided by the primary
floor; each tolerance is recalibrated on its own device rather than copied across.}
\label{tab:validation}
\small
\begin{tabular}{lrrrr}
\toprule
Precision & $n$ & Max-Q floor & RTX 5090 floor & RTX PRO 6000 Server floor \\
\midrule
fp16 & 8192 & \num{1.93e-04} & \num{2.23e-04} (1.15$\times$) & \num{1.93e-04} (1.00$\times$) \\
bf16 & 8192 & \num{1.53e-03} & \num{1.72e-03} (1.12$\times$) & \num{1.53e-03} (1.00$\times$) \\
tf32 & 8192 & \num{2.86e-04} & \num{3.60e-04} (1.26$\times$) & \num{2.86e-04} (1.00$\times$) \\
fp8 & 8192 & \num{1.61e-03} & \num{1.75e-03} (1.08$\times$) & \num{1.61e-03} (1.00$\times$) \\
\bottomrule
\end{tabular}

\end{table}

Table~\ref{tab:validation} gives the transfer test. At $n=\egValRefN$, all replay residual floors
stay within a \egValMinFloorRatio--\egValMaxFloorRatio{} multiplicative band of the primary
instrument across FP16, BF16, TF32, and FP8. The independently calibrated tolerances are
\egValTau{} on the \egValDevice{} and \egValATau{} on the \egValADevice{}. The numerical boundary
also transfers twice: the replays have \egValCountBitStable{} and \egValACountBitStable{}
bit-stable workloads, respectively, while scatter-add and index-add remain \class{Snd}. Their
measured divergences are \egValScatterNd{}/\egValIndexNd{} on the first device and
\egValAScatterNd{}/\egValAIndexNd{} on the second.

Performance is the part that does \emph{not} transfer. On the common validation shapes, the
\egValDevice{} sustains \egValGemmFpSixteenMedian{} and \egValGemmFpEightMedian{}~TFLOP/s, whereas
the \egValADevice{} sustains \egValAGemmFpSixteenMedian{} and
\egValAGemmFpEightMedian{}~TFLOP/s. Their memory triads reach \egValTriadGBs{} and
\egValATriadGBs{}~GB/s, respectively, with \egValPeakMem{} and \egValAPeakMem{}\,GiB peak resident
sets imposed by the replay shape. The added server result strengthens the separation: residual-floor order and
the deterministic/nondeterministic boundary survive two changes of instrument, while throughput
remains a device-local covariate.

\paragraph{Continuation across devices.} A replay is not only a test but an extension. Each device's
record commits to the hash of the record it continues, so the second instrument's record folds in the
first, and the evidence advances in one direction rather than sitting as a loose pile of independent
files. The genesis is the primary record; the \egValDevice{} extends it, and the \egValADevice{}
extends that, leaving a head of \hx{\egContinuationHead{}\ldots} after \egContinuationLength{} links.
Because a link is the content hash of (the prior head, the device's summary), the head is recomputed
by anyone holding the records, a later device only appends, and an earlier head stays valid as the
record grows; a device may also continue from any prior head, so the record branches when two
instruments extend the same point. The server link was produced by driving that card to
\egContinuationBigN{}-wide products at its board-power limit, where its FP16 residual floor reads
\egContinuationServerFloorBig{}, in line with the smaller sizes and extending the floor measurement
upward on a third configuration.

\FloatBarrier

\section{Reconstruction from committed seeds and device signatures}\label{sec:gone}

A continuation invites the question of where it can safely go. Most directions are harmless. Two
devices may extend the same head, and the structure simply branches; a record may be re-submitted,
and the content hash makes it idempotent; a history may be rebuilt from the genesis with different
content, and the head changes, so the substitution is visible to anyone who kept the
old head~\cite{crosby2009tamper}. One direction is not harmless. A link may point at a device that no
longer exists. The hash is still valid and the chain still holds, but the evidence behind that link
cannot be re-run, because the instrument is gone. This is the hazardous case: a record that is
intact as a structure and hollow as evidence.

\paragraph{Recoverable and device-local claims.} We test it on a device that was retired after producing its record, by
asking how much of that record a different, living instrument can bring back. The inputs were
generated from committed seeds, so they regenerate on any device; we recompute the dead
\egResurrectDead{}'s residual floor and numerical classes on an \egResurrectPresent{} and check which
of its claims re-derive. All of the transferable ones do: \egResurrectFloorRecovered{} residual
floors and \egResurrectClassRecovered{} numerical classes return within tolerance, for a recoverable
fraction of \egResurrectFraction\%. What does not return are its \egResurrectLost{} throughput
figures, which were properties of that particular silicon, its clocks and its thermal envelope, and
are gone with it. A retired instrument's record is therefore not wholly dead: the part that was a
transferable invariant is reconstructible on a successor, and only the device-local part is lost.
The reconstruction is deliberately heavy, rebuilding the floor up to the dead device's largest
size, so the living instrument bears the full cost of recovering what the dead one can no longer state for
itself.

\paragraph{A recallable device signature.} The living instrument also carries something
the dead one cannot lend it: a recallable signature of its own. Manufacturing and architecture leave
each part answering a fixed physical question the same way every time, and we read such an answer,
the residual-floor vector across precisions together with a board-filling memory sweep, and fold it
into a key the device re-derives on demand~\cite{drawnapart2022,pappu2002puf}. Re-running the probe
on the \egResurrectPresent{} returns the same vector to within \egFingerprintDist\%, inside the
\egFingerprintTol\% match window, so the key \hx{\egFingerprintKey{}\ldots} recalls the device: it is
not written onto the card but read off it, and asking again gives the same answer. This is the
measurement counterpart of hardware device attestation, and the kind of identity that record-keeping
and product-security rules increasingly expect a high-impact accelerator to be able to
present~\cite{nvidia2024attestation,eu2024aiact,eu2024cra}.

\paragraph{Interconnect latency asymmetry.} The same probe reads the part's internal geometry.
A streaming sweep crosses the two on-package dies at full speed and shows nothing, as the vendor's
numbers promise. Random access does not: it is latency-bound, and the effective random-access rate
falls from \egSeamCacheRate{}~GB/s within the on-die cache to \egSeamHbmRate{}~GB/s on one die's memory
and to \egSeamCrossRate{}~GB/s once the working set spans both, a \egSeamRatio$\times$ step the
streaming figure flattens to nothing. The curve is a device fingerprint in the sense the
fingerprinting literature means~\cite{drawnapart2022}: a stable, structure-revealing signature that
tells this class of part from another. The dense products, by contrast, stay bit-\egSeamPressureStable{}
even with the board nearly full, so the instrument's numerics are reproducible while its physical
layout is legible. The board-filling sweep held \egFingerprintMem{}\,GiB at \egFingerprintBW{}~GB/s
while the part sustained \egBTwoHundredTput{}~TFLOP/s, driven near its limits to take the
measurement.

\FloatBarrier

\section{Archive format and offline audit}\label{sec:archive}

The source archive places the compiled document beside the evidence it cites
(Figure~\ref{fig:archive}). Document sources and figures are at the top level; an ancillary directory
holds the code, the \egNumObservations{} observations one line each, the verification transcript, the
two-device differential, both out-of-sample device replays, the graph, and a manifest of every member
with its SHA-256, headed by the evidence-graph root \hx{\egRootHash{}\ldots}.

\begin{figure}[H]
\centering
\begin{bytefield}[bitwidth=2.7em, boxformatting={\centering\footnotesize}]{12}
  \bitbox{12}{\code{anc/manifest.sha256}\quad root \hx{\egRootHash{}\ldots}}\\
  \bitbox{12}{\code{main.tex}\quad\code{numbers.tex}\quad\code{*\_table.tex}}\\
  \bitbox{12}{\code{refs.bib}\quad\code{main.bbl}\quad\code{figures/*.pdf}}\\
  \bitbox{12}{\code{anc/code/*.py}: workloads, verifier, graph}\\
  \bitbox{12}{\code{anc/observations.jsonl}\quad\code{anc/verification.json}}\\
  \bitbox{12}{\code{anc/cross\_device.json}\quad\code{anc/device\_validation*.json}}\\
  \bitbox{12}{\code{anc/evidence\_graph.json}\quad nodes and edges}\\
\end{bytefield}
\caption{The archive as a single sealed record: the document sources compile to the paper, the
ancillary members carry the evidence, and the manifest binds them to one root hash.}
\label{fig:archive}
\end{figure}

The checker re-hashes every member against the manifest and runs the audit pass of
Section~\ref{sec:graph}:
\begin{quote}\small
\code{python3 anc/code/verify\_graph.py anc/evidence\_graph.json}
\end{quote}
It uses only the standard library, recomputes every reduction, and prints the root hash on success.
This is the check the paper's numbers rest on: the residuals and reductions behind
Tables~\ref{tab:verify} and~\ref{tab:taxonomy} are recomputed from the observations, not trusted.

\paragraph{Device-independent re-verification.} The offline audit establishes integrity, that
the residuals are the stated functions of the observations and nothing was altered. It does not, by
itself, re-run the measurement. It does not have to be bound to the original card either. The inputs
are generated from committed seeds, not shipped as data, so a reader regenerates them on any device
and recomputes the cheap check there; the calibrated tolerance is what absorbs the roundoff
difference between the verifying device and the original. We carry this out on the host: a GPU FP16
product is re-verified on the \emph{CPU} from its seed, with a Freivalds residual of \egPortableResid{}
against a tolerance of \egPortableTau{}, no GPU involved. The two-card agreement of
Section~\ref{sec:cross} is the same portability at the strongest setting, a second physical device
reproducing the output bit for bit. Where bit-exact reproduction is the goal rather than
tolerance-bounded agreement, a deterministic kernel or a cross-hardware replay of the tensor-core
operations closes the remaining gap~\cite{hawkeye2026,trustlessgpu2025}.

\paragraph{Scaling the audit.} A full audit is linear in the record, which is the right cost when a
record is the size of this one but the wrong cost for a computation that runs for days. The claims are
the leaves of a Merkle tree whose root the document commits to (\hx{\egMerkleRoot{}\ldots} over
\egNumClaims{} claims here), so any \emph{single} number is verified against the root with an
inclusion path of \egProofLen{} hashes, without downloading the rest, and the per-claim audit is
$O(\log N)$ rather than $O(N)$. For the heavy regime where even one product is too large to recheck,
the linear identity has a sub-linear interactive-proof form~\cite{thaler2013,cormode2012streaming}
that the same record can carry; we use the cheap check here and leave that substitution to the scale
that needs it.

\section{Threat model and guarantees}\label{sec:threat}

The pieces above are, read together, a transparency log for hardware measurement: an append-only,
content-addressed structure with inclusion and consistency proofs, against which a reader checks
claims without trusting their producer~\cite{laurie2014ct,crosby2009tamper}. It is worth stating
plainly what a dishonest producer cannot do, because that is where the construction stops being a
convenience and becomes a guarantee (Table~\ref{tab:threat}).

A producer cannot \emph{fabricate} a reported number: a linear claim that is not the true product is
rejected by the identity check with probability $1-2^{-k}$ (Proposition~\ref{prop:detect}), and a
nonlinear one is caught by its algebraic guard (Section~\ref{sec:nonlinear}). A producer cannot
\emph{rewrite} a past record: every node is content-addressed, so a change moves the root, and a
held-onto earlier head no longer extends to the new one~\cite{crosby2009tamper}. A producer cannot
\emph{equivocate}, showing one history to one reader and another to another, the split-view attack
that a Merkle root alone does not prevent: the cross-device differential is exactly a witness
cosignature, a second physical device that independently reproduces the bits, and a reader who
requires that cosignature cannot be split unless the devices themselves collude, the same defense a
witness quorum gives a public log~\cite{melara2015coniks,syta2016cosi}. A producer cannot quietly
continue from a \emph{vanished} device: the hazardous case of Section~\ref{sec:gone} is detected
because its transferable invariants must re-derive on a living device, and a link whose evidence
will not come back is a link a reader can refuse. And a producer cannot \emph{impersonate} a device
it does not hold: the recallable signature is a challenge the device must answer for itself.

\begin{table}[H]
\centering
\caption{Adversary model. Each goal a dishonest producer might pursue is denied by a property the
record already carries; a verifier checks the evidence rather than trusting the producer.}
\label{tab:threat}
\small
\begin{tabular}{lll}
\toprule
Adversary goal & Mechanism & Guarantee \\
\midrule
Report a false number & probabilistic identity / algebraic check & rejected w.p.\ $1-2^{-k}$ (Prop.~\ref{prop:detect}) \\
Alter a committed record & content addressing & root changes; consistency proof fails~\cite{crosby2009tamper} \\
Equivocate (split view) & witness cosignature & a second device reproduces the bits~\cite{syta2016cosi} \\
Continue from an absent device & reconstruction from seeds & transferable invariants must re-derive (\S\ref{sec:gone}) \\
Impersonate a device & recallable signature & the device answers its own challenge \\
Evade a known probe & Fiat-Shamir challenge & probe bound to the output (\S\ref{sec:attack}) \\
\bottomrule
\end{tabular}
\end{table}

None of these is new as a cryptographic idea; they are the working parts of a transparency
log~\cite{laurie2014ct,melara2015coniks,crosby2009tamper,syta2016cosi}. What is new is that the
witnesses are physical devices and the entries are measurements, so the cosignature is a card
reproducing a result and the identity is a card answering for itself.

\section{Soundness against an adaptive adversary}\label{sec:attack}

The identity check that anchors all of this rests on a condition it is easy to give away. Freivalds'
guarantee holds only when the probe is drawn after the product is fixed and is hidden from whoever
produced it~\cite{freivalds1977,goldwasser1989}: a producer who knows the probe in advance can pass a
wrong answer. A record that commits its probe seed so that anyone can re-verify offline has, in the
same stroke, handed that seed to the adversary. We measured what this costs.

Against a producer who does \emph{not} know the probe, the most a wrong product can hide is the
tolerance: a corruption whose residual stays under $\tau$ is by definition within the roundoff the
check admits. On the \egResurrectPresent{} that window is \egEvadeWindowFpSixteen\% of the output at
FP16 and \egEvadeWindowFpEight\% at FP8, so the same admission that lets a low-precision product
through also widens the room an adversary has to work in, the tolerance and the attack surface being
one quantity.

Against a producer who \emph{does} know the probe, there is no bound at all. The
residual sees a corruption $E$ only as $Ex$, so an $E$ in the null space of the committed probe,
$Ex=0$, is invisible at any size. We corrupt a product by \egEvadeRel\% of its norm in that null
space, and its residual against the committed probe is \egEvadeRho{}, under the tolerance: a
half-wrong result, accepted. This is not a weakness of the device but of a fixed challenge, and it is
why a committed seed cannot by itself carry an adversarial guarantee.

The repair keeps the record exactly as offline and reproducible as before. Derive the probe from a
hash of the claimed output, a Fiat-Shamir challenge~\cite{fiatshamir1986}, so that the challenge an
adversary would have to aim at depends on the very output being corrupted; the same half-wrong
product now has residual \egFiatShamirRho{}, far above the tolerance, and is rejected. The second
device closes it from the other side without any secret at all: holding the true product, it differs
from the corrupted one by \egWitnessRho{} of the output, so the witness cosignature catches what the
single check, fooled by its own published seed, did not.

The second witness, though, is the weaker of the two repairs, and it is worth seeing exactly how far
it reaches. It catches the corruption above only because one device held the true product and the
other the corrupted one, so they disagreed. But the two cards of Section~\ref{sec:cross} agree to the
last bit, and an adversary who corrupts \emph{every} witness by the same amount preserves that
agreement. We put one coordinated null-space corruption, again \egMwEvadeRel\% of the output, on
\egNumWitness{} bit-identical witnesses at once: the committed check is fooled on
\egCommittedFooled{} of them, the witnesses still agree so the redundancy sees nothing, and a
half-wrong result is accepted by a quorum. Redundancy answers an accidental fault and an
uncoordinated adversary; it does not answer one who corrupts all copies the same way. The
Fiat-Shamir challenge does, on every copy at once, because its probe is a hash of each claimed output
and the corruption has nowhere left to hide: it catches \egFsCaught{} of the same witnesses. The
construction therefore reports its floors under a fixed seed, where the producer is honest and only
roundoff is at stake, and ships a verifier that, in its adversarial mode, draws the probe from the
output it is checking, so the soundness does not depend on a secret the archive has already
published.

\section{Physical stress and the trust boundary}\label{sec:physical}
The corruptions of Sections~\ref{sec:linear} and~\ref{sec:nonlinear} are injected: a flipped bit, a
deterministic miscomputation, placed by hand to exercise the check and measure its sensitivity. They
say how the check responds to a wrong number, not how often a wrong number arises on its own. That
second question is best put to the silicon directly, and it has a security form: can a tenant with no
more than the access a public cloud grants \emph{physically} drive the device into a wrong result, or
move the tolerance the check accepts, without ever touching the output?

The management plane that would make this easy is closed to a tenant. Setting the power limit,
locking the clocks, or lowering the core voltage, the lever a software undervolting fault needs on a
processor~\cite{murdock2020plundervolt}, all return insufficient permission on this instance. What
remains is what any compute job can do: hold the device at sustained full load, and drive a $di/dt$
power virus, back-to-back tensor-core bursts separated by short idles whose instantaneous current
swings cause supply droop that an \emph{average} power cap does not register. We ran this for
\egStressMinutes{} minutes while checking every burst with the same float32 identity, so a transient
error would be caught rather than injected, and we measured the residual floor cold, after a thermal
soak, and under $di/dt$ (Table~\ref{tab:stress}).

\begin{table}[H]
\centering
\caption{The residual floor at three physical operating points on an RTX 5090, measured over
\egStressGemms{} verified products while the linear check ran continuously. The floor, which
Section~\ref{sec:attack} shows is the attack surface, is invariant across them: inflation
\egStressInflationHot$\times$ after the thermal soak and \egStressInflationDidt$\times$ under $di/dt$.}
\label{tab:stress}
\small
\begin{tabular}{lrr}
\toprule
Operating point & Temp.\,(\textdegree C) & Residual floor $\bar\rho$ \\
\midrule
Cold baseline & 45 & \num{2.36e-04} \\
Thermal soak & 80 & \num{2.36e-04} \\
$di/dt$ burst & 80 & \num{2.36e-04} \\
\bottomrule
\end{tabular}

\end{table}

The device reached \egStressPeakTemp{}\,$^\circ$C and a transient \egStressPeakPower{}\,W, capping its
own clocks at the power limit to defend itself (throttle reason \texttt{SW\_POWER\_CAP}); the $di/dt$
pattern held a mean \egStressDidtPower{}\,W, below the \egStressCap\,W cap, so the averaging limiter
never engaged against the current swings, yet the integrity margin did not give. The floor, which
Section~\ref{sec:attack} showed \emph{is} the attack surface, did not inflate under the stress a
tenant can apply, and no product exceeded its anomaly threshold: \egStressErrors{} natural errors in
\egStressGemms{} checks. This is the honest shape of the hardware threat. A healthy
modern part does not hand a tenant a silent miscalculation on demand: it throttles before it breaks,
and its GDDR7 carries on-die correction that masks the single-bit upset a memory-disturbance attack
would aim for~\cite{lin2025gpuhammer}. Silent corruption at scale is instead a property of the rare
defective or marginal unit~\cite{hochschild2021cores,dixit2021silent,llmprism2026,sdcanatomy2026}, or
of an attacker with privilege or special conditions a tenant lacks~\cite{murdock2020plundervolt,
lin2025gpuhammer,lightning2021gpu}. That rarity is the argument for the record, not against it: an
event that one core in a thousand exhibits, or that an adversary triggers deliberately, is exactly
what a number carrying its own check guards against and a spot re-run misses.

Two limits follow, and we state them plainly. The check defends the \emph{record}: it presumes the
host that produced and witnessed a number was not itself physically subverted. An adversary who can
glitch the power rails of the verifier while it forms the Fiat-Shamir challenge, or who runs on a
rooted device, defeats software witnessing from below; there the construction must compose with a
hardware root of trust, an enclave's asymmetric attestation key, which it complements and does not
replace. And the per-card signature of Section~\ref{sec:cross} is behavioral, a throughput and a
floor, sensitive to temperature and aging: it is an integrity signal that two records came from the
same kind of part, not a cryptographic identity, and is read here only as the former. None of the
machinery here is new cryptography; it is the disciplined application of a probabilistic identity, a
public-coin challenge, and content addressing to the problem of making a hardware measurement carry
its own evidence.

\section{Related work}
Silent hardware errors at fleet scale~\cite{hochschild2021cores,dixit2021silent} are the reason a
correct-looking number can be wrong; our linear check is a per-claim guard against exactly this, in
the spirit of algorithm-based fault tolerance~\cite{huang1984abft} but recorded as auditable evidence
rather than applied only at run time, and using a probabilistic identity~\cite{freivalds1977} with a
tolerance grounded in rounding-error analysis~\cite{higham2002accuracy,higham2019probabilistic}.
Verifying nondeterministic floating-point results on accelerators is the subject of recent work that
combines worst-case bounds with empirical acceptance regions and a dispute game~\cite{yao2026tao}; we
share the idea of a calibrated acceptance region but place it inside a self-contained measurement
record, and we report the floor it is calibrated against. Bringing algorithm-based fault tolerance to
the attention layer is itself an active line~\cite{flashabft2025,fttransformer2025}; we use the same
decomposition, recorded as a check rather than applied inline, and only at a verification size. Where
even the cheap identity is too costly, verifiable-computation protocols give a sub-linear proof for
matrix products~\cite{thaler2013,cormode2012streaming}, and cross-hardware replay reproduces
tensor-core arithmetic away from the original card~\cite{hawkeye2026,trustlessgpu2025}. Floating-point
non-associativity and its effect on
reproducibility~\cite{shanmugavelu2024fpna,whitehead2011floating,demmel2015reproducible} is what our
numerical classes measure; the open question of which kernels are bit-stable on current hardware is
answered here for Blackwell. Blackwell has no detailed public timing model, and its subsystems have been mapped only through
microbenchmarks~\cite{blackwell2025dissect,blackwell2025microbench}; our residual floor and per-card
signature are a measurement in that line, aimed at correctness rather than peak throughput. That
identical-model cards differ in speed is itself documented at cluster scale~\cite{sinha2022notall};
we add that they nonetheless agree to the last bit on deterministic output, and we turn the
difference into both an independent check and a recorded fingerprint. Content addressing and Merkle
structures~\cite{merkle1987,benet2014ipfs}, provenance models~\cite{moreau2013prov}, repeatability
studies~\cite{collberg2016repeatability}, and benchmarking methodology~\cite{hoefler2015benchmarking}
supply the record's machinery; the FAIR principles~\cite{wilkinson2016fair} and the
measure-report-verify spine of climate accounting~\cite{unfccc2015paris} supply the expectation that
a reported number be checkable, and the calibration of trust to evidence is the human side of the
same concern~\cite{lee2004trust}. The workloads are standard dense \textsc{gemm}, streaming, and
fused attention~\cite{dao2022flashattention} on Blackwell-class hardware~\cite{nvidia2025blackwell}
with explicitly recorded product specifications~\cite{nvidia2026rtxpro6000maxq,nvidia2026rtx5090,
nvidia2026rtxpro6000server}; they are the instrument, not the result.

\section{Scope and limitations}
The audit establishes that the archive is internally consistent and integral, and the linear check
establishes that a verified product equals $AB$ to within $\tau$ with confidence $1-2^{-k}$; neither
establishes that a different device would reproduce the observations. That is the role of the
reproducibility classes, which state for each quantity whether bit-stability, a bound, or a covariate
dependence is what to expect on re-execution. The residual floor and the classes are measured on this
\egDevice{} model with one software stack, then replayed on \egReplayDevices{}; the method is
device-independent, the reported numbers are not. The validation narrows the interpretation from
``one workstation'' to ``three measured Blackwell product configurations,'' not to all future
Blackwell systems. The identity check covers linear quantities, and the decomposition of
Section~\ref{sec:nonlinear}
extends an algebraic guard to attention and atomic accumulation, but the softmax reference and the
attention scores are formed at a verification size rather than the hot-path size, and an operator with
no such decomposition would fall back to the divergence class. Re-running a check needs the inputs,
which the committed seeds regenerate on any device, but re-running it to bit-exactness still needs the
same architecture. Tamper-evidence rests on the collision resistance of SHA-256~\cite{nist2015sha},
and the detection guarantee on the independence of the probe columns. The guarantees are for the
record and presume the producing host is not physically subverted; the boundary where software
witnessing must yield to a hardware root of trust, and the behavioral rather than cryptographic
nature of the per-card signature, are treated in Section~\ref{sec:physical}.

\section{Conclusion}
We reported hardware measurements as a record in which every number carries, by content hash, the
observation and the verification behind it: a probabilistic identity with a device-calibrated
tolerance for the linear quantities, a measured reproducibility class for the rest, and a multi-stage
transcript that can reveal an error, repair it, and re-check. A reader audits the whole thing from one
archive in a single offline pass. The construction is independent of the parts it runs on; the two
\egDevice{} cards are the instrument that let us measure the residual floor the tolerance stands on,
read the differential between two nominally identical dies, and exhibit the detection it promises.
The \egNumReplayDevices{} replays on \egReplayDevices{} then show which parts of the record survive
changes of instrument: the floor order and numerical class boundary do, while throughput remains a
local property of the card.

{\small
\bibliographystyle{plain}
\bibliography{refs}
}

\end{document}